 \def\old#1{}    
\def\beq{\begin{equation}}
\def\eeq{\end{equation}}
\def\bea{\begin{eqnarray}}
\def\eea{\end{eqnarray}}
\def\ba{\begin{array}}
\def\ea{\end{array}}
\def\bitem{\begin{itemize}}
\def\eitem{\end{itemize}}
\def\ben{\begin{enumerate}}
\def\een{\end{enumerate}}
\def\eg{{\it e.g., \/}}
\newcommand{\beqa}{\begin{eqnarray}}
\newcommand{\eeqa}{\end{eqnarray}}
\newcommand{\beqan}{\begin{eqnarray*}}
\newcommand{\eeqan}{\end{eqnarray*}}
\newcounter{l1}
\newcounter{l2}
\newcounter{l3}
\newcommand{\bdotlist}{\begin{list}{$\bullet$}{}}
\newcommand{\bboxlist}{\begin{list}{$\Box$}{}}
\newcommand{\bbboxlist}{\begin{list}{\raisebox{.005in}{{\tiny
$\blacksquare$ \ \ }}}{}}
\newcommand{\bdashlist}{\begin{list}{$-$}{} }
\newcommand{\blist}{\begin{list}{}{} }
\newcommand{\barablist}{\begin{list}{\arabic{l1}}{\usecounter{l1}}}
\newcommand{\balphlist}{\begin{list}{(\alph{l2})}{\usecounter{l2}}}
\newcommand{\bAlphlist}{\begin{list}{\Alph{l2}.}{\usecounter{l2}}}
\newcommand{\bdiamlist}{\begin{list}{$\diamond$}{}}
\newcommand{\bromalist}{\begin{list}{(\roman{l3})}{\usecounter{l3}}}
\newtheorem{proposition}{Proposition}
\title{Risk-Based Capacity Accreditation of Resource- Colocated Large Loads in Capacity Markets}
\author{\large Siying Li, Lang Tong, and Timothy D. Mount
\thanks{\scriptsize
Siying Li and Lang Tong (\{sl2843, lt35\}@cornell.edu) are with the School of Electrical and Computer Engineering, Cornell University, Ithaca NY, USA. Timothy Mount (tdm2@cornell.edu) is with the Dyson School of Applied Economics and Management, Cornell University, Ithaca NY, USA.}
\thanks{\scriptsize This work was supported in part by the National Science Foundation under Award 2218110 and the Power Systems Engineering Research Center (PSERC) Project M-48.}
}
\begin{document}
\IEEEaftertitletext{\vspace{-1\baselineskip}}
\maketitle

\begin{abstract}
We study capacity accreditation of resource-colocated large loads, defined as large demands such as data center and manufacturing loads colocated with behind-the-meter generation and storage resources, synchronously connected to the bulk power system, and capable of participating in the wholesale electricity market as an integrated unit. Because the accredited capacity of a resource portfolio is not equal to the sum of its individual resources' capacity values, we adopt a risk-based capacity accreditation framework to evaluate the combined reliability contribution of colocated resources. Grounded in the effective load carrying capability (ELCC) metric, the proposed capacity accreditation employs a convex optimization engine that jointly dispatches colocated resources to minimize reliability risk. We apply the developed methodology to a hydrogen manufacturing facility with colocated renewable generation, storage, and fuel cell resources.
\end{abstract}

\begin{IEEEkeywords}
Capacity accreditation, colocated large load, effective load carrying capability, capacity market, renewable-colocated hydrogen production.
\end{IEEEkeywords}

\section{Introduction}\label{sec:Intro}
The global electric power sector is experiencing an unprecedented surge in electricity demand. With an anticipated 3.5\% compound annual growth rate over the next 25 years, electricity consumption in 2050 is projected to exceed 2.5 times the 2023 level of 25,000 TWh. The top three sectors that fuel the demand growth are the ``large loads'' from data centers, hydrogen production, and electrified transportation \cite{McKinsey:24}. 

The large load surge has made resource adequacy one of the most pressing engineering, economic, and regulatory challenges. Resource adequacy means meeting demand with sufficient generation to ensure a prescribed level of reliability. Adding new generation capacities, however, requires long planning and permitting timelines, limiting its ability to respond to large and rapid demand growth. Although renewable generation can be added more quickly with low costs, its stochasticity results in low \emph{Effective Load Carrying Capability} (ELCC), and its integration requires substantial grid support. Already, sharp increases in renewable curtailments have been observed in recent years, signaling that renewable integration may be at the limit imposed by demand-generation mismatches, network constraints, and reliability requirements. 

A promising but underdeveloped approach to support resource adequacy amid rising large load is \emph{Resource Colocation}.\footnote{Both {\em colocation} and {\em collocation} are used in the literature to describe physically colocated resources, with {\em colocation} often emphasizing coordinated operation for operational or economic objectives. Some studies distinguish between colocated and hybrid resources based on metering and coordination; this paper does not make such a distinction.} In this context, a resource-colocated large load is a large load paired with behind-the-meter generation and storage resources that are synchronized with the bulk power system and participate in the wholesale market as a single entity.  Fig.~\ref{fig:hydrogen_system} illustrates a renewable-colocated hydrogen manufacturing facility with grid-connected renewable generation, an electrolyzer to produce hydrogen sold in a hydrogen market, and a hydrogen storage tank with the ability to generate power via fuel cell from stored hydrogen. Such a colocated hydrogen production resource can participate in the energy, ancillary service, and capacity markets. 

\begin{figure}[htbp]
    \centering
    \includegraphics[width=0.8\linewidth]{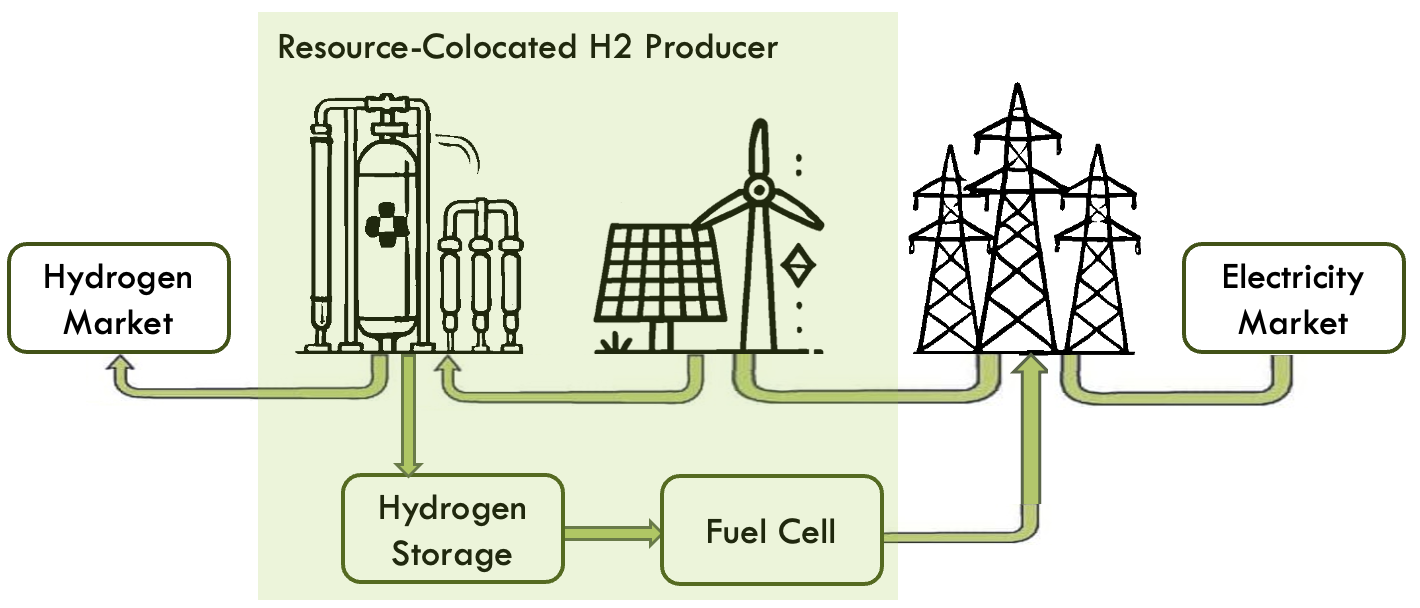}
    \vspace{-0.5em}
    \caption{Hydrogen production colocated with renewable, storage and fuel cell.}
    \label{fig:hydrogen_system}
\end{figure}

This paper focuses on resource-colocated large loads participating in the capacity market as a new type of capacity resource. In particular, a colocated large load can supply power to the grid as a generator, charge or discharge as a storage asset, and reduce consumption as a demand response resource. The optimal utilization of these capabilities to provide capacity support is the central objective of this work.

\subsection{Related Work}
Capacity accreditation is typically formulated within a risk-based framework. This framework maps a resource's nameplate capacity to its accredited value by accounting for system reliability risk. Accredited capacity is most commonly quantified using metrics such as ELCC or marginal reliability impact (MRI) \cite{Garver:1966,zhao2024}.

Most existing studies focus on the accreditation of individual resources \cite{Sioshansi14TPS, Edwards17SEGN, Evans19TPS}. With the ongoing evolution of power systems, increasing attention has been paid to portfolio effects in capacity accreditation. Several studies have examined the capacity value of colocated and hybrid resources, recognizing interactions among different resource types \cite{Schlag&etal:20E3Rpt,Stenclik&Goggin&Ela&Ahlstrom:ESIG22Rpt,Ericson&etal:22NREL,Wen&Song:2023}. In these works, the combined capacity value of multiple resources is assessed; however, the resources are generally not modeled as a single entity with internally coordinated dispatch decisions, nor are the internal operational constraints of colocated resources explicitly represented.
While the literature on colocated loads remains limited, related studies on colocated hydrogen production and data centers in energy and ancillary service markets have begun to emerge \cite{Glenk&Reichelstein:19Nature,Wu&etal:Energy22,Li&etal:25arxiv,Li&etal:26PESIM}. To the best of our knowledge, this paper is perhaps the first to examine the capacity accreditation of resource-colocated large loads in a capacity market.
 
Methodologies for capacity accreditation of colocated resources remain insufficiently developed. A commonly proposed portfolio-level approach is to sum the accredited capacities of individual components \cite{Stenclik&Goggin&Ela&Ahlstrom:ESIG22Rpt,Ericson&etal:22NREL}, which can yield inaccurate results due to interactions among resources. An alternative approach applies the risk-based accreditation framework at the aggregate level by dispatching the colocated resource according to extensions of existing scheduling rules \cite{Stenclik&Goggin&Ela&Ahlstrom:ESIG22Rpt,Ericson&etal:22NREL,Wen&Song:2023}. In the literature, such scheduling is typically driven by cost minimization or profit maximization in electricity markets \cite{Sioshansi14TPS}. To accurately capture reliability contributions, some studies adopt dispatch strategies that directly minimize reliability risk \cite{Edwards17SEGN, Evans19TPS}, which are most closely aligned with our approach. However, these methods often incur substantial computational burdens and therefore rely on simplifying assumptions—such as homogeneous storage efficiencies—to maintain tractability. Rule-based scheduling heuristics, such as those adopted in system operator practices \cite{PJM_Manual20A}, are computationally efficient, but as approximations of optimal coordination, they do not guarantee accurate capacity accreditation.

\subsection{Summary of Contribution}
The main technical contributions of this paper are threefold.
First, 
we propose a convex optimization-based computation engine that assesses system reliability through a reliability loss minimization formulation, in contrast to the rule-based heuristics commonly used in the literature and in practice. This formulation interprets existing heuristic practices as feasible, though generally suboptimal, solutions to the proposed optimization.
Second, in Sec.~\ref{sec:ELCC}, we introduce a scenario-based procedure for ELCC evaluation that accounts for stochastic resource availability and load variability.
Finally, in Sec. \ref{sec:Simulation}, we present a novel application of the proposed capacity accreditation methodology to a hydrogen manufacturing system colocated with on-site renewable generation, long-duration hydrogen storage, and fuel cell-based power production.

\section{Resource Modeling and ELCC}
\subsection{Resource Modeling}
We add {\em colocated resources (COL)} as a new resource type defined as a portfolio of commonly used resource types, which includes
(i) \emph{unlimited resources (U)}, such as conventional generators with dispatchable output subject to availability (\eg forced outage rates); (ii) \emph{variable resources (V)}, such as wind and solar, represented by stochastic generation profiles; (iii) \emph{storage resources (STO)}, modeled through state-of-charge dynamics with power limits and efficiency parameters; and (iv) \emph{flexible-demand resources (FLEX)}. 

Let $\mathcal{C}=\{\mathrm{U},\mathrm{V},\mathrm{STO},\mathrm{FLEX},\mathrm{COL}\}$ be the set of resource types.  Each resource $i$ of class $c\in\mathcal{C}$ is characterized by a pair $(Q_i^{c},\boldsymbol{\kappa}_i^{c})$ of parameters, where $Q_i^{c}$ denotes its nameplate capacity and $\boldsymbol{\kappa}_i^{c}$ denotes the corresponding parameter vector, collecting the associated technical and reliability parameters (\eg forced outage rates, storage efficiency and power limits, and demand reduction limits).
The complete set of capacity resources is parameterized by
$$
\boldsymbol{\theta}^{\mbox{\tiny\sf R}} := \big\{
\{ (Q_i^{c},\, \boldsymbol{\kappa}_i^{c}) \}_{i\in [n_c]} \big\}_{c\in\mathcal{C}},
$$
where $n_c$ denotes the number of units of class $c$. 

\subsection{ELCC for Colocated Resources}
By its standard definition \cite{Garver:1966}, ELCC quantifies the additional load\footnote{
Unlike flexible demand, which represents controllable demand-side resources, load refers to the system's inflexible electricity demand.} that a power system can supply while maintaining the same target reliability after adding a new resource. The calculation proceeds by first evaluating the reliability of a baseline system, then adding the resource and increasing the load until the reliability returns to the baseline level. The resulting load increase is the ELCC of the resource.

While the computation of ELCC for standalone resources is standard, extending it to colocated resources is nontrivial. One approach is to evaluate the ELCC of a colocated resource as the sum of the individual ELCC values. This simple approach, unfortunately, is incorrect as shown in the example below.

Consider a large load colocated with a 2 MW wind resource and a 5 MWh storage. ELCC is evaluated over four intervals, with the maximum unserved energy (UE) set to 4 MWh. The leftmost column of Fig.~\ref{fig:toy-example} shows two baseline electricity deficit profiles, each resulting in 4 MWh of UE.  
\vspace{-0.5em}
\begin{figure}[h]
    \centering
    \includegraphics[width=1.0\linewidth]{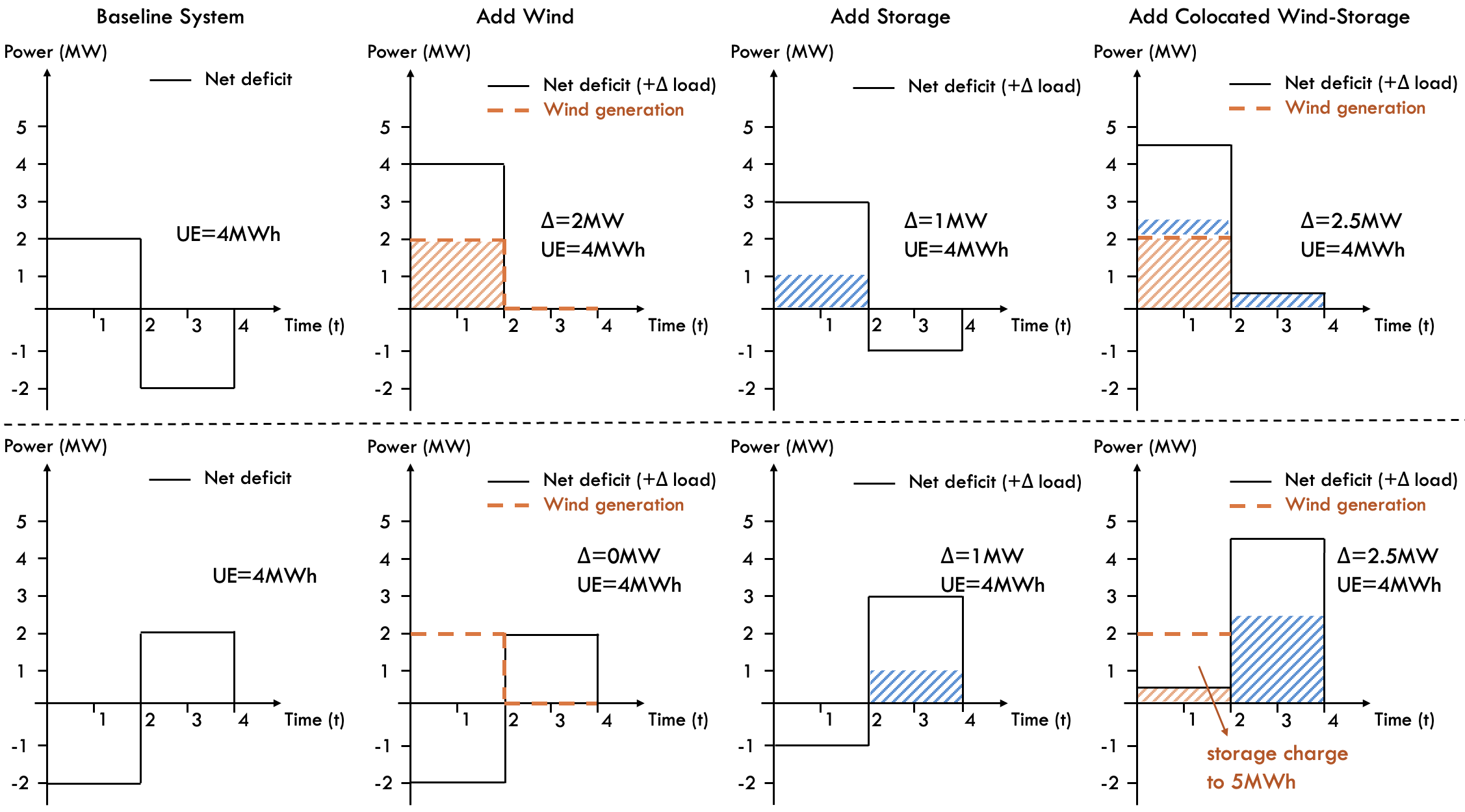}
    \vspace{-0.5em}
    \caption{Two toy examples of colocated wind-storage ELCC. Upper row: sub-additive case. Lower row: super-additive case.}
    \label{fig:toy-example}
\end{figure}
\vspace{-0.5em}

\bitem
\item Adding a 2 MW standalone wind resource (generation profiles shown in the second column) reduces UE differently in the two baselines: the top can accommodate an additional $\Delta=2$MW of load while keeping UE at 4 MWh, whereas the bottom cannot accommodate any additional load. Thus, the standalone wind ELCC is 2 MW in the top case and 0 MW in the bottom case. 
\item Adding a 5 MWh storage resource with 2 MWh initially stored (third column) allows 2 MWh of discharge in the first two intervals of the top baseline and in the last two intervals of the bottom baseline, permitting an additional $\Delta=1$ MW of load while maintaining the UE level. The ELCC for the storage is therefore 1 MW.
\item When the 2 MW wind and 5 MWh storage are colocated, optimal storage scheduling enables an additional $\Delta=2.5$ MW of load in both baselines. The coloctaed ELCC is thus $\Delta=2.5$ MW. 
\eitem 
We see that the sum of wind and storage (standalone) ELCCs is higher than the colocated ELCC (under optimal scheduling) in the top baseline (referred to as sub-additive) and lower in the bottom case (referred to as super-additive).

\section{Reliability Evaluation with Optimal Dispatch}\label{sec:PSReliability}
The example above shows that the ELCC evaluation of a colocated resource requires optimizing over the resource portfolio to jointly reduce reliability risk. This section develops that optimization model, which serves as the computation engine for ELCC computation described in the next section. 

\subsection{Reliability Loss Minimization}\label{sec:opt_formulation}
Consider a power system characterized by the resource vector $\boldsymbol{\theta}^{\mbox{\tiny\sf R}}$ and the load vector $\boldsymbol{\theta}^{\mbox{\tiny\sf D}}$. 
Here, $\boldsymbol{\theta}^{\mbox{\tiny\sf D}} := (Q^{\mbox{\tiny\sf D}}, \{\boldsymbol{\kappa}_i^{\mbox{\tiny\sf D}}\}_{i\in[n_{\text{\fontsize{4pt}{4pt}\selectfont \sf D}}]})$ 
collects the peak load level $Q^{\mbox{\tiny\sf D}}$ and the parameters that describe the temporal and spatial profiles of each load component (\eg typical daily shapes).

Each scenario $\omega$ represents a joint realization of resource availability and load. Given $\omega$, the model specifies $\boldsymbol{P}^{\mbox{\tiny\sf U}}(\boldsymbol{\theta}^{\mbox{\tiny\sf R}},\omega)$, $\boldsymbol{P}^{\mbox{\tiny\sf V}}(\boldsymbol{\theta}^{\mbox{\tiny\sf R}},\omega)$, $\boldsymbol{s}(\boldsymbol{\theta}^{\mbox{\tiny\sf R}},\omega)$, $\bar{\boldsymbol{P}}^{\mbox{\tiny\sf FLEX}}(\boldsymbol{\theta}^{\mbox{\tiny\sf R}},\omega)$, $\bar{\boldsymbol{P}}^{\mbox{\tiny\sf COL}}(\boldsymbol{\theta}^{\mbox{\tiny\sf R}},\omega)$, and $\boldsymbol{D}(\boldsymbol{\theta}^{\mbox{\tiny\sf D}},\omega)$,
where $\boldsymbol{P}^{\mbox{\tiny\sf U}}=\{P_t^{\mbox{\tiny\sf U}}\}_{t\in[T]}$ and $\boldsymbol{P}^{\mbox{\tiny\sf V}}=\{P_t^{\mbox{\tiny\sf V}}\}_{t\in[T]}$ are time-indexed total generation trajectories for conventional and renewable units,
$\boldsymbol{s}=\{s_i\}_{i\in[n_{\text{\fontsize{4pt}{4pt}\selectfont \sf STO}}]}$ gives the initial state of charge for each storage unit,
$\bar{\boldsymbol{P}}^{\mbox{\tiny\sf FLEX}}=\{\bar{P}_{i,t}^{\mbox{\tiny\sf FLEX}}\}_{i\in[n_{\text{\fontsize{4pt}{4pt}\selectfont \sf FLEX}}],\,t\in[T]}$ and $\bar{\boldsymbol{P}}^{\mbox{\tiny\sf COL}}=\{\bar{P}_{i,t}^{\mbox{\tiny\sf COL}}\}_{i\in[n_{\text{\fontsize{4pt}{4pt}\selectfont \sf COL}}],\,t\in[T]}$ denote the baseline power trajectories representing the nominal operating behavior of each flexible-demand resource $i$ and each colocated asset portfolio $j$ at time $t$, and
$\boldsymbol{D}=\{D_t\}_{t\in[T]}$ denotes the time-varying system load.
For brevity, the dependence on $(\boldsymbol{\theta},\omega)$ is omitted here and in subsequent expressions.

At each time period $t\in[T]$, the optimization determines the operation of flexible resources and system-level balancing variables, collectively denoted by
\vspace{-0.1em}
$$
\boldsymbol{P}_t \coloneqq \left\{
\begin{aligned}
& \{P^{\mbox{\tiny\sf CH}}_{i,t},\, P^{\mbox{\tiny\sf DIS}}_{i,t},\, e_{i,t}\}_{i\in[n_{\text{\fontsize{4pt}{4pt}\selectfont \sf STO}}]},\\
& \{P^{\mbox{\tiny\sf FLEX}}_{i,t},\, P^{\mbox{\tiny\sf RED}}_{i,t}\}_{i\in[n_{\text{\fontsize{4pt}{4pt}\selectfont \sf FLEX}}]},\\
& \{\boldsymbol{P}^{\mbox{\tiny\sf COL}}_{i,t}\}_{i\in[n_{\text{\fontsize{4pt}{4pt}\selectfont \sf COL}}]}, P_t^{\mbox{\tiny\sf CUR}},\, P_t^{\mbox{\tiny\sf UE}}
\end{aligned}
\right\},
$$
where $P^{\mbox{\tiny\sf CH}}_{i,t}$, $P^{\mbox{\tiny\sf DIS}}_{i,t}$, and $e_{i,t}$ denote the charging power, discharging power, and state of charge of storage $i$ at time $t$. $P^{\mbox{\tiny\sf FLEX}}_{j,t}$ and $P^{\mbox{\tiny\sf RED}}_{i,t}$ denote the actual power consumption and demand reduction of flexible demand $i$ at time $t$. $\boldsymbol{P}^{\mbox{\tiny\sf COL}}_{i,t}$ is a generic vector variable representing the operational decisions of colocated resource $i$ at time $t$.
$P_t^{\mbox{\tiny\sf CUR}}$ and $P_t^{\mbox{\tiny\sf UE}}$ are system curtailed and unserved power, respectively.

Let $\boldsymbol{P} := \{\boldsymbol{P}_t\}_{t\in[T]}$ include all decision variables over the horizon.
For a given scenario~$\omega$, the reliability loss associated with the dispatch trajectory is denoted by
$R(\boldsymbol{P};\,\boldsymbol{\theta}^{\mbox{\tiny\sf R}},\boldsymbol{\theta}^{\mbox{\tiny\sf D}},\omega)$, and the system reliability assessment is formulated as the following optimization problem.
\begin{subequations}\label{eq:opt_reliability}
\begin{align}
& \underset{\substack{\boldsymbol{P}}}{\rm minimize} && R(\boldsymbol{P};\boldsymbol{\theta}^{\mbox{\tiny\sf R}},\boldsymbol{\theta}^{\mbox{\tiny\sf D}},\omega) \\ \nonumber
& \mbox{subject to} && \forall t \in [T],~ \forall i \in [n_{\text{\fontsize{4pt}{4pt}\selectfont \sf STO}}], ~\forall j \in [n_{\text{\fontsize{4pt}{4pt}\selectfont \sf FLEX}}],~\forall r \in [n_{\text{\fontsize{4pt}{4pt}\selectfont \sf COL}}],\\  \label{eq:balance1}
&&& G_t=P_t^{\mbox{\tiny\sf U}}+P_t^{\mbox{\tiny\sf V}}+\mathbf{1}^\top\boldsymbol{P}^{\mbox{\tiny\sf DIS}}_t+P_t^{\mbox{\tiny\sf UE}}+g(\boldsymbol{P}_{t}^{\mbox{\tiny\sf COL}}),\\
&&& L_t=D_t+\mathbf{1}^\top \boldsymbol{P}^{\mbox{\tiny\sf CH}}_t+\mathbf{1}^\top\boldsymbol{P}^{\mbox{\tiny\sf FLEX}}_t+P^{\mbox{\tiny\sf CUR}}_t+l(\boldsymbol{P}_{t}^{\mbox{\tiny\sf COL}}),\\   \label{eq:balance2}
&&& G_t=L_t,\\ \label{eq:bilinearcons}
&&& P_{i,t}^{\mbox{\tiny\sf CH}}P_{i,t}^{\mbox{\tiny\sf DIS}}=0,\\ \label{eq:storage_limits}
&&& 0\leq P_{i,t}^{\mbox{\tiny\sf CH}}\leq \bar{P}_{i}^{\mbox{\tiny\sf CH}},~~ 0\leq P_{i,t}^{\mbox{\tiny\sf DIS}}\leq \bar{P}_{i}^{\mbox{\tiny\sf DIS}},\\ \label{eq:storage_evolve}
&&& e_{i,1}=s_i,~~ \underline{E}_i \leq e_{i,t+1}\leq \overline{E}_i,\\  \label{eq:storage_dynamics}
&&& e_{i,t+1}=e_{i,t}+\eta_i P_{i,t}^{\mbox{\tiny\sf CH}}-P_{i,t}^{\mbox{\tiny\sf DIS}},\\ \label{eq:flexible_demand1}
&&& P_{j,t}^{\mbox{\tiny\sf FLEX}} = \bar{P}_{j,t}^{\mbox{\tiny\sf FLEX}}-P_{j,t}^{\mbox{\tiny\sf RED}},\\ \label{eq:flexible_demand2}
&&& 0\leq P_{j,t}^{\mbox{\tiny\sf RED}}\leq \bar{P}_{j,t}^{\mbox{\tiny\sf RED}} \mathbbm{1}_{\{P^{\mbox{\tiny\sf U}}_t+P^{\mbox{\tiny\sf V}}_t< D_t+\mathbf{1}^\top\bar{\boldsymbol{P}}^{\mbox{\tiny\sf FLEX}}_t+\mathbf{1}^\top\bar{\boldsymbol{P}}_{t}^{\mbox{\tiny\sf COL}}  \}},\\  \label{eq:flexible_demand3}
&&& \mathbf{1}^\top\!\!\boldsymbol{P}_{t}^{\mbox{\tiny\sf RED}}\!\!\leq\! \max(\!D_t\!+\!\!\mathbf{1}^\top\!\!\bar{\boldsymbol{P}}^{\mbox{\tiny\sf FLEX}}_t\!\!+\!\mathbf{1}^\top\!\!\bar{\boldsymbol{P}}_{t}^{\mbox{\tiny\sf COL}}\!\!\!-\!\!(\!P^{\mbox{\tiny\sf U}}_t\!+\!P^{\mbox{\tiny\sf V}}_t\!), 0),\\ \label{eq:colocated_resource}
&&& \boldsymbol{F}_r^{\mbox{\tiny\sf COL}}(\boldsymbol{P}_{r,t}^{\mbox{\tiny\sf COL}})\leq \mathbf{0}.
\end{align}
\end{subequations}

Depending on the chosen reliability metric, $R(\cdot)$ represents different measures of reliability loss, such as:\footnote{Here, $R(\cdot)$ gives the per-scenario reliability loss. The corresponding reliability metric is obtained by taking the expectation across all stochastic scenarios; See Sec.~\ref{sec:ELCC}.}
\begin{itemize}
    \item For loss of load expectation (LOLE) minimization: 
    \begin{equation}
    R(\boldsymbol{P};\boldsymbol{\theta}^{\mbox{\tiny\sf R}},\boldsymbol{\theta}^{\mbox{\tiny\sf D}},\omega) = \sum_{t=1}^T \mathbbm{1}_{\{P_t^{\mbox{\tiny\sf UE}}>0\}},
    \end{equation}
    \item For expected unserved energy (EUE) minimization: 
    \begin{equation}\label{eq:eue_objective}
    R(\boldsymbol{P};\boldsymbol{\theta}^{\mbox{\tiny\sf R}},\boldsymbol{\theta}^{\mbox{\tiny\sf D}},\omega) = \sum_{t=1}^T P_t^{\mbox{\tiny\sf UE}}.
    \end{equation}
\end{itemize}

We define $G_t$ as the total power supply at time $t$, including generation, storage discharge, and colocated output $g(\boldsymbol{P}_{t}^{\mbox{\tiny\sf COL}})$. $L_t$ denotes the total demand, comprising inflexible load and the actual consumption of flexible-demand and colocated resources $l(\boldsymbol{P}_{t}^{\mbox{\tiny\sf COL}})$. 
For notational convenience, $g(\cdot)$ and $l(\cdot)$ denote the aggregate colocated output and consumption.
Equations \eqref{eq:balance1}--\eqref{eq:balance2} enforce power balance, with shortfalls represented by unserved energy $P_t^{\mbox{\tiny\sf UE}}$ and surpluses by curtailment $P_t^{\mbox{\tiny\sf CUR}}$. Equations \eqref{eq:bilinearcons}--\eqref{eq:storage_dynamics} specify storage operations, including charge/discharge limits and state-of-charge dynamics.
Flexible-demand constraints \eqref{eq:flexible_demand1}--\eqref{eq:flexible_demand3} ensures that demand reductions occur only under supply shortages and apply solely to covering the shortfall, consistent with industry practices \cite{PJM_Manual20A}.
\eqref{eq:colocated_resource} defines a generic constraint set representing the operational characteristics of colocated resources.

The scheduling in this formulation differs from economic dispatch, as the objective is to minimize reliability loss rather than generation cost. However, this distinction matters only under scarcity conditions: when resources are abundant and no reliability loss occurs, applying economic dispatch yields the same reliability outcomes.
The rule-based scheduling commonly used in capacity accreditation practices corresponds to feasible solutions of this optimization.

\subsection{Convex Reformuation of Reliability Loss Minimization}
The optimization in \eqref{eq:opt_reliability} is generally a mixed-integer nonlinear program (MINLP), mainly due to the bilinear storage operation constraints in \eqref{eq:bilinearcons} and the indicator functions in \eqref{eq:flexible_demand2}.\footnote{Colocated resources, modeled as portfolios of standard resource types, are assumed not to introduce additional nonconvexities beyond those already represented.} These nonconvexities make large-scale reliability evaluation computationally challenging, especially under numerous stochastic scenarios.

Nevertheless, tractable reformulations can be developed for certain reliability metrics.
In particular, when the EUE metric is adopted, the objective \eqref{eq:eue_objective} is linear in the decision variables, enabling a convex reformulation of \eqref{eq:opt_reliability} through appropriate relaxations and transformations, as shown below.\footnote{If the objective function is nonconvex, suitable relaxations can yield tractable approximations.}
\begin{proposition}
    If the objective function $R(\cdot)$ is convex, \eg as in \eqref{eq:eue_objective} for EUE minimization, then the reliability optimization problem \eqref{eq:opt_reliability} can be equivalently reformulated as the following convex program:
    \vspace{-0.5em}
    \begin{equation}\label{eq:cvx}
    \begin{aligned}
    & \underset{\substack{\{P_{i,t}^{\mbox{\tiny\sf CH}}, P_{i,t}^{\mbox{\tiny\sf DIS}},e_{i,t},\\\boldsymbol{P}^{\mbox{\tiny\sf COL}}_{r,t},P_t^{\mbox{\tiny\sf CUR}},P_t^{\mbox{\tiny\sf UE}} \}}}{\rm minimize} && \sum_{t=1}^{T} P_t^{\mbox{\tiny\sf UE}} \\
    & \text{subject to} && \forall t \in [T],~~ \forall i \in [n_{\text{\fontsize{4pt}{4pt}\selectfont \sf STO}}],~\forall r \in [n_{\text{\fontsize{4pt}{4pt}\selectfont \sf COL}}], \\
    & && \eqref{eq:balance1}, \\
    & && G_t\!
      =\!D_t\!+\!\mathbf{1}^\top \!\boldsymbol{P}^{\mbox{\tiny\sf CH}}_t\!+\!\mathbf{1}^\top\!\tilde{\boldsymbol{P}}^{\mbox{\tiny\sf FLEX}}_t\!+\!P^{\mbox{\tiny\sf CUR}}_t\!+\!l(\boldsymbol{P}_{t}^{\mbox{\tiny\sf COL}}), \\  
    & && \eqref{eq:storage_limits},~\eqref{eq:storage_evolve},~\eqref{eq:storage_dynamics}, \eqref{eq:colocated_resource}.
    \end{aligned}
    \end{equation}
    The aggregate actual power consumption of flexible-demand resources, denoted by $\mathbf{1}^\top\tilde{\boldsymbol{P}}^{\mbox{\tiny\sf FLEX}}_t$, is given by
    \begin{equation}
    \mathbf{1}^\top\tilde{\boldsymbol{P}}^{\mbox{\tiny\sf FLEX}}_t=\left\{
    \begin{array}{lcl}
    \mathbf{1}^\top\bar{\boldsymbol{P}}^{\mbox{\tiny\sf FLEX}}_t && S_t\leq 0,\\
    \min(\mathbf{1}^\top\bar{\boldsymbol{P}}^{\mbox{\tiny\sf RED}}_t, S_t) && S_t>0.\\    
    \end{array}\right.
    \end{equation}
    where $S_t\!=\!D_t\!+\!\mathbf{1}^\top\bar{\boldsymbol{P}}^{\mbox{\tiny\sf FLEX}}_t\!+\!\mathbf{1}^\top\bar{\boldsymbol{P}}_{t}^{\mbox{\tiny\sf COL}}\!-\!P^{\mbox{\tiny\sf U}}_t\!-\!P^{\mbox{\tiny\sf V}}_t$ denotes the net shortfall between total demand and the available uncontrollable generation at time $t$ before scheduling flexible resources.
\end{proposition}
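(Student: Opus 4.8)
The plan is to establish the equivalence by eliminating the two sources of nonconvexity in \eqref{eq:opt_reliability} separately: the bilinear storage complementarity \eqref{eq:bilinearcons}, and the flexible-demand block \eqref{eq:flexible_demand1}--\eqref{eq:flexible_demand3} with its data-dependent indicator. First I would argue that an optimizer of the problem obtained by simply deleting \eqref{eq:bilinearcons} can be modified, at no cost to the objective, to satisfy \eqref{eq:bilinearcons}, so that the relaxation is exact. Then I would show that the nonconvex flexible-demand constraints can be replaced by the deterministic substitution $\tilde{\boldsymbol{P}}^{\mbox{\tiny\sf FLEX}}_t$. Once both steps are complete, the surviving constraints \eqref{eq:balance1}, \eqref{eq:storage_limits}--\eqref{eq:storage_dynamics}, and the assumed-convex \eqref{eq:colocated_resource}, together with the convex objective, yield exactly the convex program \eqref{eq:cvx}.

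For the storage step I would use a perturbation (exchange) argument. Suppose an optimizer of the relaxed problem has $P^{\mbox{\tiny\sf CH}}_{i,t}>0$ and $P^{\mbox{\tiny\sf DIS}}_{i,t}>0$ simultaneously. I reduce the charge by $a$ and the discharge by $b=\eta_i a$, with $a$ chosen so that one of the two hits zero (which is always possible since either $\eta_i P^{\mbox{\tiny\sf CH}}_{i,t}\le P^{\mbox{\tiny\sf DIS}}_{i,t}$ or $P^{\mbox{\tiny\sf DIS}}_{i,t}<\eta_i P^{\mbox{\tiny\sf CH}}_{i,t}$). By \eqref{eq:storage_dynamics} the net state-of-charge increment $\eta_i P^{\mbox{\tiny\sf CH}}_{i,t}-P^{\mbox{\tiny\sf DIS}}_{i,t}$ is unchanged, so the whole SOC trajectory and hence \eqref{eq:storage_evolve} and \eqref{eq:storage_limits} remain feasible. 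The net injection into the balance \eqref{eq:balance1}--\eqref{eq:balance2} changes by $a(1-\eta_i)\ge 0$, which I absorb by lowering $P_t^{\mbox{\tiny\sf UE}}$ (weakly decreasing the EUE objective \eqref{eq:eue_objective}) or, once it reaches zero, by raising $P_t^{\mbox{\tiny\sf CUR}}$ (leaving the objective unchanged). Thus some optimizer satisfies \eqref{eq:bilinearcons}, so deleting it preserves the optimal value; the only properties used are $\eta_i\le 1$ and monotonicity of the objective in $P_t^{\mbox{\tiny\sf UE}}$.

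For the flexible-demand step the key observation is that both the indicator in \eqref{eq:flexible_demand2} and the right-hand side of \eqref{eq:flexible_demand3} are functions of the scenario data only, through the sign and value of $S_t=D_t+\mathbf{1}^\top\bar{\boldsymbol{P}}^{\mbox{\tiny\sf FLEX}}_t+\mathbf{1}^\top\bar{\boldsymbol{P}}_{t}^{\mbox{\tiny\sf COL}}-P^{\mbox{\tiny\sf U}}_t-P^{\mbox{\tiny\sf V}}_t$, and do not depend on any decision variable. Hence, for each $t$, the feasible set of $\{P^{\mbox{\tiny\sf RED}}_{j,t}\}$ is a fixed polytope decoupled from the rest of the problem: it is $\{\mathbf{0}\}$ when $S_t\le 0$, and otherwise the box $0\le P^{\mbox{\tiny\sf RED}}_{j,t}\le \bar{P}^{\mbox{\tiny\sf RED}}_{j,t}$ intersected with the aggregate cap $\mathbf{1}^\top\boldsymbol{P}^{\mbox{\tiny\sf RED}}_t\le S_t$. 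A monotonicity argument in the same spirit as the storage step then shows that raising the aggregate reduction $\mathbf{1}^\top\boldsymbol{P}^{\mbox{\tiny\sf RED}}_t$ only relaxes the balance and weakly lowers $P_t^{\mbox{\tiny\sf UE}}$, so some optimizer attains the maximal admissible reduction $\min(\mathbf{1}^\top\bar{\boldsymbol{P}}^{\mbox{\tiny\sf RED}}_t,\,\max(S_t,0))$. The resulting aggregate consumption $\mathbf{1}^\top\tilde{\boldsymbol{P}}^{\mbox{\tiny\sf FLEX}}_t$, namely the baseline $\mathbf{1}^\top\bar{\boldsymbol{P}}^{\mbox{\tiny\sf FLEX}}_t$ net of this maximal reduction, is then a deterministic function of $S_t$ alone. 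Because the objective and all remaining constraints depend on the flexible demand only through this aggregate, the individual $P^{\mbox{\tiny\sf FLEX}}_{j,t}$ and $P^{\mbox{\tiny\sf RED}}_{j,t}$ can be eliminated, removing the indicator nonconvexity.

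I expect the main obstacle to be rigorously justifying the decoupling in the flexible-demand step: one must verify that nothing in the objective or in the surviving constraints makes a larger reduction strictly worse than charging storage or curtailing, so that the ``maximal reduction is optimal'' claim genuinely holds, and that an arbitrary per-unit split realizing the optimal aggregate is always feasible within the fixed box. A secondary point worth stating explicitly is that the equivalence is one of optimal value together with a constructive lifting: any optimizer of \eqref{eq:cvx} extends to a feasible point of \eqref{eq:opt_reliability} satisfying both \eqref{eq:bilinearcons} and the original indicator constraints, so the two problems share the same optimal value and the same optimal dispatch for the quantities that enter the reliability metric.
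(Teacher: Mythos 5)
Your proof follows the same two-step decomposition as the paper's own sketch: an exchange argument showing that dropping the bilinear constraint \eqref{eq:bilinearcons} is lossless (with the surplus injection $a(1-\eta_i)$ absorbed into unserved energy or curtailment), followed by the observation that the indicator in \eqref{eq:flexible_demand2} and the cap in \eqref{eq:flexible_demand3} depend only on scenario data through $S_t$, so the maximal admissible reduction is weakly optimal and the flexible-demand block can be replaced by the deterministic aggregate $\tilde{\boldsymbol{P}}^{\mbox{\tiny\sf FLEX}}_t$. Your version is correct and simply supplies the explicit perturbation, feasibility checks, and lifting argument that the paper's sketch leaves implicit.
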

\begin{proof}[Sketch of Proof]
The convexification can be established in two steps. First, the bilinear storage constraints in \eqref{eq:bilinearcons} are relaxed without loss of optimality. By allowing simultaneous charging and discharging, any solution with $\tilde{P}_{i,t}^{\mbox{\tiny\sf CH}}>0$ and $\tilde{P}_{i,t}^{\mbox{\tiny\sf DIS}}>0$ can be replaced by an alternative solution involving only net charging or discharging, which yields the same state-of-charge trajectory and an objective value that is no worse. The objective strictly improves if the storage round trip efficiency is less than 100\%.

Second, the flexible demand constraints in \eqref{eq:flexible_demand1}--\eqref{eq:flexible_demand3} can be reformulated to directly determine the aggregate actual consumption of flexible-demand resources based on system conditions. Specifically, when there is no generation shortfall ($S_t\leq 0$), flexible demand operates at its maximum consumption level; when a shortfall occurs ($S_t>0$), the total demand reduction is set to the minimum of the available reduction capacity and the shortfall amount.
\end{proof}

Note that in the optimization, generation and load are represented through aggregated profiles. Since this formulation does not consider network constraints, other resources can similarly be scheduled by class to enhance computational efficiency. Specifically, storage units and flexible-demand resources can each be grouped into classes according to their technical characteristics, consistent with class-based scheduling practices used by system operators.

\subsection{Extension to Network Constraints}
Incorporating network constraints introduces locational considerations into the reliability evaluation. Transmission limitations influence the deliverability of generation resources and, consequently, the overall system reliability. The optimization formulation in \eqref{eq:opt_reliability} can be extended to include network effects by integrating DC power flow equations.

While this extension provides a more accurate representation of system reliability, it also increases the computational complexity. In particular, generation and load profiles can not be aggregated, and other resources can no longer be scheduled by class without losing the locational information required for modeling power flows. To balance modeling fidelity and computational efficiency, one practical approach is to perform parallel locational reliability evaluations for different system areas, under the assumption of sufficient inter-area transmission capacity.

\section{Risk-based Capacity Accreditation}\label{sec:ELCC}
Building on the reliability evaluation framework developed in Sec.~\ref{sec:PSReliability}, we now apply it to quantify the ELCC.
Consider a baseline system characterized by $(\boldsymbol{\theta}^{\mbox{\tiny\sf R}}, \boldsymbol{\theta}^{\mbox{\tiny\sf D}})$.
The expected system reliability, accounting for the stochastic scenarios of both resources and load, is evaluated as
\begin{equation}
\mathcal{R}(\boldsymbol{\theta}^{\mbox{\tiny\sf R}}, \boldsymbol{\theta}^{\mbox{\tiny\sf D}}) 
:= \mathbb{E}_{\omega}\!\left[ R(\boldsymbol{P}^*;\boldsymbol{\theta}^{\mbox{\tiny\sf R}},\boldsymbol{\theta}^{\mbox{\tiny\sf D}},\omega) \right],
\end{equation}
where $\boldsymbol{P}^*$ denotes the optimal dispatch decisions obtained from \eqref{eq:opt_reliability}.\footnote{Alternative measures such as conditional value at risk (CVaR) can also be used to capture reliability risks.}

Let $(Q', \boldsymbol{\kappa}')$ denote the capacity and associated technical and reliability parameters of a new resource to be added to the system. 
The augmented system resource set is expressed as
$$
\boldsymbol{\theta}^{\mbox{\tiny\sf R}}_{+}(Q', \boldsymbol{\kappa}')
:= \boldsymbol{\theta}^{\mbox{\tiny\sf R}} \oplus (Q', \boldsymbol{\kappa}'),
$$
where $\oplus$ represents the augmentation of the existing resource set with the new unit.

Following Garver's classical definition \cite{Garver:1966}, system load is represented by the annual peak load $Q^{\mbox{\tiny\sf D}}$.
An increase in load by $\Delta$ corresponds to raising the peak load by $\Delta$ while scaling the load profile proportionally:
$$
\boldsymbol{\theta}^{\mbox{\tiny\sf D}}(\Delta)
:= \big(Q^{\mbox{\tiny\sf D}}+\Delta,\,
\{\boldsymbol{\kappa}_i^{\mbox{\tiny\sf D}}\}_{i\in[n_{\text{\tiny D}}]}\big).
$$

The ELCC of the new resource $(Q', \boldsymbol{\kappa}')$ is the maximum additional load $\Delta$ that can be served without degrading the target reliability level, formally defined as
$$
{\rm ELCC}(Q', \boldsymbol{\kappa}')
:= \Delta ~
\text{s.t.}~
\mathcal{R}(\boldsymbol{\theta}^{\mbox{\tiny\sf R}}, \boldsymbol{\theta}^{\mbox{\tiny\sf D}})
= \mathcal{R}\big(\boldsymbol{\theta}^{\mbox{\tiny\sf R}}_+(Q', \boldsymbol{\kappa}'), \boldsymbol{\theta}^{\mbox{\tiny\sf D}}(\Delta)\big).
$$
The specific computational procedure for determining $\Delta$ is presented in Algorithm~\ref{alg1}.
\begin{algorithm}[H]
\caption{Scenario-based ELCC Computation}\label{alg1}
\begin{algorithmic}[1]
\REQUIRE Baseline system $(\boldsymbol{\theta}^{\mbox{\tiny\sf R}}, \boldsymbol{\theta}^{\mbox{\tiny\sf D}})$, 
new resource $(Q',\boldsymbol{\kappa}')$, 
scenario generator $\mathsf{Gen}(\cdot)$, 
number of scenarios $N$, tolerance $\varepsilon$.
\ENSURE $\Delta = {\rm ELCC}(Q',\boldsymbol{\kappa}')$.

\STATE \textbf{Scenario generation:} 
Generate and fix $N$ stochastic scenarios 
$\{\omega^{(n)}\}_{n=1}^{\mbox{\tiny N}} \leftarrow \mathsf{Gen}(\boldsymbol{\theta}^{\mbox{\tiny\sf R}}, \boldsymbol{\theta}^{\mbox{\tiny\sf D}})$, representing realizations of resource availability and load.%
\footnotemark[1]

\STATE \textbf{Baseline evaluation (Step \ref{step:base}):}
\FOR{$n = 1,\dots,N$}\label{step:base}
    \STATE Solve~\eqref{eq:opt_reliability} under scenario~$\omega^{(n)}$ 
    to obtain $\boldsymbol{P}^{\ast}$.
    \STATE Compute $R_{\mathrm{base}}^{(n)} \gets 
    R(\boldsymbol{P}^{\ast};\boldsymbol{\theta}^{\mbox{\tiny\sf R}},\boldsymbol{\theta}^{\mbox{\tiny\sf D}},\omega^{(n)})$.
\ENDFOR
\STATE $\widehat{\mathcal{R}}_0 \gets \frac{1}{N}\sum_{n=1}^{\mbox{\tiny N}} R_{\mathrm{base}}^{(n)}$.
\vspace{3pt}

\STATE \textbf{Reliability matching:}
\STATE Set $\boldsymbol{\theta}^{\mbox{\tiny\sf R}}_{+}\gets\boldsymbol{\theta}^{\mbox{\tiny\sf R}}\oplus(Q',\boldsymbol{\kappa}')$.
\REPEAT
  \STATE \emph{Repeat Step \ref{step:base}} with $\big(\boldsymbol{\theta}^{\mbox{\tiny\sf R}}_{+}, \boldsymbol{\theta}^{\mbox{\tiny\sf D}}(\Delta)\big)$ and the fixed $\{\omega^{(n)}\}$ to obtain $\widehat{\mathcal R}(\Delta)$.
  \STATE Update $\Delta$ to reduce $|\widehat{\mathcal R}(\Delta)-\widehat{\mathcal R}_0|$.\footnotemark[2]
\UNTIL{$|\widehat{\mathcal R}(\Delta)-\widehat{\mathcal R}_0|<\varepsilon$}

\STATE \textbf{Output:} Return $\Delta$ as 
${\rm ELCC}(Q', \boldsymbol{\kappa}')$.
\end{algorithmic}
\end{algorithm}
\footnotetext[1]{%
$\mathsf{Gen}(\cdot)$ can be any Monte Carlo or bootstrapping routine that samples resource and load trajectories consistent with system parameters. }
\footnotetext[2]{%
The iterative search for $\Delta$ can employ monotone root-finding methods such as bisection or secant search, leveraging the monotonicity of reliability loss with respect to load.}

In capacity accreditation applications, system operators often compute the ELCC of a resource relative to different benchmarks or reference units. Examples include defining ELCC as the equivalent increase in load that can be supplied at the target reliability level, as the capacity of a perfectly reliable generator, or as that of a reference unit with a specified forced outage rate \cite{NERC_2011}. The proposed approach is general and can be applied to any of these benchmark-based calculations.

\section{ELCC of Resource-Colocated Hydrogen}\label{sec:Simulation}
\noindent{\textbf{System description:}} We studied a hydrogen manufacturing system colocated with renewable generation, electrolyzers, and hydrogen storage with fuel cell, as illustrated in Fig.~\ref{fig:hydrogen_system}. 
The electrolyzer can modulate its power consumption within a specified operating range and convert electricity into hydrogen, which can either be sold directly or stored in a hydrogen tank. The ability to adjust power drawn from the grid makes the hydrogen producer a flexible-demand resource. 

The system can also export power to the grid using colocated renewable generation or by running the fuel cell with hydrogen stored in the tank. However, because the round-trip efficiency of converting electricity to hydrogen and back is typically quite low, it is generally suboptimal to use stored hydrogen for power production during regular operating hours. Instead, the stored hydrogen is dispatched during scarcity events, when electricity prices are sufficiently high. 

\noindent{\textbf{Case study setup:}}
In the case study, the colocated wind capacity was 14~MW, the nominal electrolyzer demand was 14~MW with up to 30\% demand response capability, and the stored hydrogen could generate up to 11.2~MW of electricity. 
The analysis was performed on a calibrated baseline power system with a 650~MW peak load, 640~MW of conventional generation capacity, 300~MW of wind, 100~MW of solar, and a total of 700~MWh of energy storage. The load and renewable generation profiles were derived from publicly available NYISO real-time data \cite{NYISO:Data}.

\noindent{\textbf{ELCC evaluation under different reliability metrics:}}
We evaluated resource-specific and system-level ELCC for the colocated hydrogen system under LOLE- and EUE-based reliability criteria. Here, the qualified capacity was defined as the nameplate capacity for renewable, the reducible MW of flexible demand, the power rating for storage, and the sum of these for the colocated system. All reported percentages are relative to these definitions.

As shown in Fig.~\ref{fig:EUE_vs_LOLE} (left), 
under LOLE minimization, wind achieved an ELCC of 7.8~kW, representing less than 0.1\% of its qualified capacity when participating as a single resource in the capacity market. Flexible demand and storage reached ELCCs of 4.2~MW (100\%) and 9.7~MW (86.7\%), respectively.
The colocated system attained 16.4~MW (55.9\%), exceeding the sum of its individual components.
\begin{figure}[htbp]
    \centering
    \includegraphics[width=0.65\linewidth]{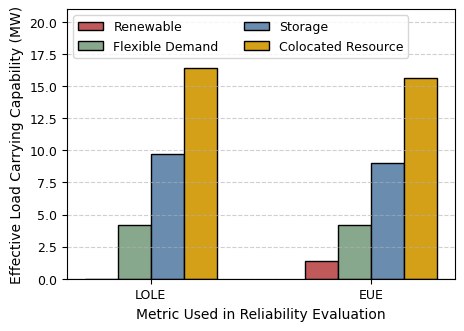}
    \vspace{-1em}
    \caption{ELCC of resources evaluated under LOLE- and EUE-based reliability criteria.}
    \label{fig:EUE_vs_LOLE}
\end{figure}

Fig.~\ref{fig:EUE_vs_LOLE} (right) presents ELCC values based on EUE. The renewable achieved an ELCC of 1.4~MW (10.1\%), while demand response and storage were credited 4.2~MW (100\%) and 9.0~MW (80.2\%), respectively. The colocated hydrogen system received 15.6~MW (53.2\%), again surpassing the sum of its individual components. These results demonstrate that the ELCC of a colocated system cannot be obtained by simply summing the ELCCs of its constituent resources.

\begin{figure}[htbp]
    \centering
    \includegraphics[width=0.65\linewidth]{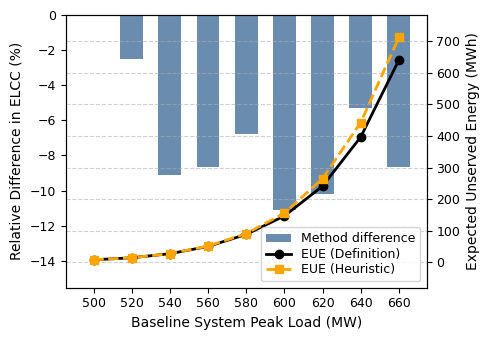}
    \vspace{-1em}
    \caption{ELCC differences of the colocated resource between the heuristic and proposed methods under different baseline system peak load levels.}
    \label{fig:Difference_ELCC}
\end{figure}

\noindent{\textbf{ELCC differences between heuristic and proposed methods:}}
Fig.~\ref{fig:Difference_ELCC} compares the ELCC of the colocated hydrogen system under a fixed scheduling priority defined by the operator~\cite{PJM_Manual20A} with the ELCC obtained from the proposed risk-minimization procedure. In the simulation, the baseline system load profile was uniformly scaled to achieve different peak load levels, while the installed generation capacity remained unchanged. The bars show the 
relative error, $\eta$, of the heuristic ELCC evaluation based on the EUE reliability criterion.  Specifically, given the heuristic ELCC estimate ($\hat{\mbox{\small ELCC}}$) and 
the ELCC by definition ($\mbox{{\small ELCC}}^*$) obtained via risk minimization,
\begin{align*}
\eta=(\hat{\mbox{\small ELCC}}-\mbox{{\small ELCC}}^* )/(\mbox{{\small ELCC}}^*).
\end{align*}
We observed that, except at a system peak load of 500~MW, the heuristic method consistently underestimated ELCC, with errors ranging from just over 2\% to more than 10\%. Such discrepancies could result in substantial monetary losses for the colocated resource.

Fig.~\ref{fig:Difference_ELCC} also shows the EUE computed using the heuristic and the proposed approaches. Across the full range of peak loads, the heuristic method consistently overestimated EUE. Both methods exhibited a convexly increasing trend, reflecting the growing difficulty of meeting demand during scarcity hours under more stressed system conditions.



\section{Conclusions}\label{sec:Conclusion}
Colocating large loads with on-site renewable and storage resources is a promising solution for managing grid demand surges. This paper highlights the underexplored benefits of colocation for resource adequacy in capacity markets.
We propose an efficient risk-minimization approach for accurately evaluating the ELCC of colocated resources, generalizing existing methods and applicable to large loads such as data centers and manufacturing facilities. Our case study demonstrates that simple approaches—such as summing individual ELCCs or using heuristic estimates—can produce significant errors when applied to colocated systems, underscoring the need for rigorous evaluation methods.

\section*{Acknowledgments}
The authors thank Prof. Le Xie and Dr. Tongxin Zheng for insightful discussions on capacity accreditation methodologies.

{
\bibliographystyle{IEEEtran}
\bibliography{BIB}
}


\end{document}